\newtheorem{theorem}{Theorem}
\newtheorem{definition}{Definition}
\newtheorem{proposition}{Proposition}
\newenvironment{proof}[1][Proof]{\noindent\textbf{#1.} }{\ \rule{0.5em}{0.5em}}
\newcommand{\ZZ}{\mathbb{Z}}
\newcommand{\CC}{\mathbb{C}}
\newcommand{\RR}{\mathbb{R}}
\newcommand{\cC}{\mathcal{C}}
\newcommand{\cF}{\mathcal{F}}
\newcommand{\norm}[1]{\left|\left| #1 \right|\right|}
\newcommand{\abs}[1]{\left \vert #1 \right \vert}
\begin{document}

\title{Above the Nyquist Rate, Modulo Folding Does Not Hurt}
\author{Elad Romanov and Or Ordentlich
	\thanks{}
	\thanks{Elad Romanov and Or Ordentlich are with the School of Computer Science and Engineering, Hebrew University of Jerusalem, Israel (emails: \{elad.romanov,or.ordentlich\}@mail.huji.ac.il). }
	\thanks{This work was supported, in part, by ISF under Grant 1791/17, and by the GENESIS consortium via the Israel Ministry of Economy and Industry.
		ER acknowledges support from ISF grant 1523/16.}
	\thanks{}}

\maketitle

\begin{abstract}
We consider the problem of recovering a continuous-time bandlimited signal from the discrete-time signal obtained from sampling it every $T_s$ seconds and reducing the result modulo $\Delta$, for some $\Delta>0$. For $\Delta=\infty$ the celebrated Shannon-Nyquist sampling theorem guarantees that perfect recovery is possible provided that the sampling rate $1/T_s$ exceeds the so-called Nyquist rate. Recent work by Bhandari et al. has shown that for any $\Delta>0$ perfect reconstruction is still possible if the sampling rate exceeds the Nyquist rate by a factor of $\pi e$. In this letter we improve upon this result and show that for finite energy signals, perfect recovery is possible for any $\Delta>0$ and any sampling rate above the Nyquist rate. Thus, modulo folding does not degrade the signal, provided that the sampling rate exceeds the Nyquist rate. This claim is proved by establishing a connection between the recovery problem of  a discrete-time signal from its modulo reduced version and the problem of predicting the next sample of a discrete-time signal from its past, and leveraging the fact that for a bandlimited signal the prediction error can be made arbitrarily small.
\end{abstract}

\section{Introduction}
\label{sec:intro}

The Shannon-Nyquist sampling theorem guarantees that any signal $x(t)$ whose Fourier transform is supported on $[-W,W]$ can be perfectly reconstructed from the discrete-time signal $\{x_n=x(nT_s)\}$ as long as $T_s<1/2W$. However, in order to be digitally stored/processed, the signal $\{x_n\}$ must be further digitized, which is achieved by quantizing it to a set of finite cardinality.

In practice, quantization of the process $\{x_n\}$ is invariably done via a scalar uniform quantizer. For a signal with bounded dynamic range, say $x_n\in(-2^{m},2^m)$ for all $n\in\ZZ$, the operation of an $R$-bit scalar uniform quantizer can be roughly described as that of writing the binary expansion of each sample as $x_n=-a_{n,0}2^m+\sum_{i=1}^{\infty}a_{n,i}2^{m-i}$, and then discarding all but the $R$ most significant bits (MSBs), i.e., $x_n$ is represented by $a_{n,0},a_{n,1},\ldots,a_{n,R-1}$. Clearly, after this form of quantization, as well as any other finite bit-rate form of quantization, $x(t)$ can no longer be perfectly reconstructed.

Let $\{x^Q_n\}$ be the signal obtained by quantizing $\{x_n\}$ using a $R$-bit uniform quantizer, and $\{x^{\text{res}}_n=x_n-x^Q_n\}$ be the residual signal. Note that the signal $\{x^{\text{res}}_n\}$ corresponds to discarding the $R$ MSBs of each sample and keeping all the remaining least significant bits (LSBs). If one has to choose between reconstructing $\{x_n\}$ from either $\{x^Q_n\}$ or $\{x^{\text{res}}_n\}$, the intuitive choice will be $\{x^Q_n\}$. After all, $|x_n-x^Q_n|<2^m\cdot 2^{-(R-1)}$, whereas $|x_n-x^{\text{res}}_n|$ can be as large as $2^m(1-2^{-R})$. However, somewhat surprisingly, this letter shows that the opposite is true, provided that $T_s<1/2W$. In particular, we improve upon the recent results of Bhandari et al.~\cite{bhandari2017unlimited} and show that whenever the sampling frequency exceeds the Shannon-Nyquist frequency, for any $R\in\mathbb{N}$, the signal $\{x_n\}$ can be perfectly reconstructed from $\{x^{\text{res}}_n\}$.

To be more concrete, the setup we consider is that of sampling a modulo-reduced signal. For a given $\Delta>0$, and a real number $x\in\RR$, we define $x^*=[x]\bmod\Delta$ as the unique number in $[-\frac{\Delta}{2},\frac{\Delta}{2})$ such that $x-x^*\in\Delta\ZZ$. For a complex number $x=a+ib\in\CC$, the modulo operation corresponds to reducing both the real and the imaginary parts modulo $\Delta$, i.e., $x^*=a^*+i b^*$. 
The signal $x^*(t)$ is obtained by modulo reducing $x(t)$ at all times, and the discrete-time signal $\{x^*_n=x^*(nT_s)\}$ is obtained by sampling $x^*(t)$ every $T_s$ seconds. Note that $x^*_n=\left[x_n\right]^*$ for all $n\in\ZZ$, where $\{x_n=x(nT_s)\}$. The question we address in this letter is \emph{under what conditions on $T_s$ and $\Delta$ is it possible to exactly recover $x(t)$ from $\{x^*_n\}$?}

Our main result is that for all finite energy signals whose Fourier transform is supported on $[-W,W]$, and whose amplitude vanishes for $|t|$ large enough, perfect reconstruction is possible for any $\Delta>0$ and any $T_s<\frac{1}{2W}$. See Section~\ref{sec:construction} for the formal statement. 

\subsection{Related  Work}

In~\cite{bhandari2017unlimited}, it was shown that under essentially the same assumptions as above on the class of signals, perfect reconstruction is possible for any $\Delta>0$ and any $T_s<\frac{1}{\pi e}\frac{1}{2W}$. Thus, our results improve the sampling rate obtained in~\cite{bhandari2017unlimited} by a factor of $\pi e$ samples per seconds. Clearly, the sampling rate $T_s<\frac{1}{2W}$ found here is the best possible, since even without modulo reduction (alternatively, $\Delta=\infty$) this condition is required in order to guarantee perfect recovery within the considered class of signals.

The results of Bhandari et al.~\cite{bhandari2017unlimited} had already sparked a lot of follow up work, see e.g.~\cite{bkr18,ct17,mjg18,rass18,rbwp18,sh18,jt19}. A closely related recent line of work is that of modulo ADCs~\cite{othsw18,boufounos12,oe13b,ro19}, which are finite bit-rate ADCs that reduce their input signal modulo $\Delta$ prior to quantization. The main difference between the so-called \emph{unlimited sampling} framework of~\cite{bhandari2017unlimited} and the \emph{modulo ADC} framework is that the former does not consider the effect of quantization noise added to the modulo reduced signal, whereas the latter explicitly studies the tradeoff between quantization rate and distortion after modulo reduction. 

The setup considered in this letter falls within the framework of unlimited sampling. Namely, we are assuming $\{x^*_n\}$ is available at perfect precision, and we are only interested in characterizing the optimal tradeoff between $T_s$ and $\Delta$ for which $\{x_n\}$ (and consequently also $x(t)$) can be perfectly reconstructed despite the modulo folding. Our results indicate that modulo reduction per se, incurs no loss on our ability to reconstruct $x(t)$, provided that the Shannon-Nyquist condition is satisfied. Therefore, the main question of interest is what can be gained from modulo folding in terms of quantization rate. The results in~\cite{othsw18} show that for stationary Gaussian processes, oversampled modulo ADCs achieve rate-distortion tradeoff which is close to the information theoretic limits. Similar results were also obtained in~\cite{othsw18} for spatially correlated processes. 

Despite the differences between the unlimited sampling and modulo ADCs frameworks, the insights gained from developing recovery algorithms for oversampled modulo ADCs~\cite{othsw18} turn out to also be suitable for recovery under the unlimited sampling framework. In particular, the recovery algorithm outlined below is inspired by that of~\cite[Section 3]{othsw18}.

\subsection{Recovery Through Prediction}

Our main result may look somewhat absurd at first, as it seems that as we drive $\Delta$ to $0$, we should end up with no information left. However, the result becomes very intuitive once a connection is established between the problem of modulo unwrapping of a discrete-time signal and the problem of predicting the next sample of a discrete-time signal from its past. 

In particular, our algorithm for recovering $\{x_n\}$ from $\{x^*_n\}$ is of a sequential nature. Since we assume $|x(t)|$ is small for all large enough $|t|$, there exists some negative integer $N\in\ZZ$ such that $|x_n|<\frac{\Delta}{2}$ for all $n<N$, which implies that $x^*_n=x_n$ for all $n<N$. Next, we would like to recover $x_n$, for $n=N$, which is no longer guaranteed to satisfy $x^*_n=x_n$. To this end, we first compute a linear predictor $\hat{x}_n=\sum_{i=1}^{\infty} h_i x_{n-i}$ for $x_n$ from the past samples, that are available to us unfolded. Then, we compute 
\begin{align}
e^*_n=[x^*_n-\hat{x}_n]^*=[x_n-\hat{x}_n]^*,\nonumber
\end{align}
where $e_n:= x_n-\hat{x}_n$, and we have used the fact that the modulo operation is invariant to translation by integer multiples of $\Delta$, such that $[a^*+b]^*=[a+b]^*$ for any $a,b\in\CC$. Note that $e^*_n=e_n$ if $|e_n|<\frac{\Delta}{2}$, and in this case we can recover $x_n$ as $x_n=\hat{x}_n+e^*_n$. Thus, recovery of $x_n$ is possible if the prediction error $e_n=x_n-\hat{x}_n$ from its past is guaranteed to  have magnitude smaller than $\frac{\Delta}{2}$. Once we have recovered $x_n$ correctly, we can use it for computing the predictor $\hat{x}_{n+1}$ of $x_{n+1}$ from its past, and repeat the same procedure. See Figure~\ref{fig:recalgfig}. Our procedure therefore succeeds in recovering $\{x_n\}$ from $\{x^*_n\}$ provided that the prediction error process $\{e_n=x_n-\hat{x}_n\}$ is bounded in magnitude by $\Delta/2$ for all $n\in\ZZ$.

\begin{figure}[t]
\begin{center}
\psset{unit=0.6mm}
\begin{pspicture}(0,0)(250,40)

\rput(0,30){$x_n$}\psline{->}(5,30)(15,30)
\psframe(15,25)(42,35)\rput(28,30){$\bmod\Delta$}
\psline{->}(42,30)(53,30)

\psframe[linestyle=dashed](8,20)(45,40)\rput(18,43){Sampler}

\rput(-27,0){
\pscircle(85,30){5}\rput(85,30){$\Sigma$}\rput(78,27){$-$}
\psline{->}(90,30)(97,30)\psframe(97,25)(115,35)\rput(106,30){$\bmod\Delta$}
\psline{->}(115,30)(122,30)\pscircle(127,30){5}\rput(127,30){$\Sigma$}
\psline{->}(132,30)(150,30)
\rput(160,30){$x_n^{\text{recovered}}$}
\psline{->}(140,30)(140,15)\psframe(132,5)(148,15)\rput(140,10){Filter}
\psline{->}(132,10)(85,10)(85,25)
\psline{->}(132,10)(127,10)(127,25)
}

\psframe[linestyle=dashed](48,0)(144,40)\rput(63,43){Decoder}

\end{pspicture}
\end{center}
\caption{Schematic architecture for the proposed recovery algorithm.}
\label{fig:recalgfig}
\end{figure}
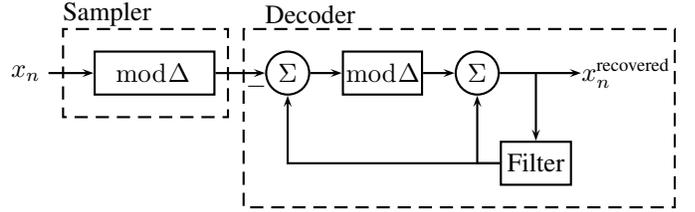 

Now, to establish our main result, it remains to show that $|e_n|$ can be made arbitrarily small at all times. This follows since the discrete-time process ${x_n}$ is bandlimited, i.e., its discrete-time Fourier transform (DTFT) is supported on $\left[-\gamma,\gamma\right]$, where $\gamma=W\cdot T_s<\frac{1}{2}$. To see this, note that 
\begin{align}
e_n=x_n-\sum_{i=1}^\infty h_i x_{n-i}=x_n\star c_n,\nonumber
\end{align}
where $c_n=\delta_n-h_n$ is a monic filter, i.e., a causal filter with first tap equal to $1$. Since the coefficients $h_1,h_2,\ldots$ are not constrained, we are free to take $\{c_n\}$ as any monic filter. In particular, for any $\epsilon>0$, we may choose $\{c_n\}$ to satisfy
\begin{align}
|C(f)|^2=\left|\sum_{k=1}^{\infty}c_k e^{-i2\pi f k}\right|^2=\begin{cases}
\epsilon & |f|\leq\gamma\\
\left(\frac{1}{\epsilon}\right)^{\frac{2\gamma}{1-2\gamma}} & |f|\in(\gamma,\frac{1}{2}]
\end{cases}.\nonumber
\end{align}
The existence of a monic filter with this frequency response is guaranteed by Paley-Wiener's spectral factorization theorem~\cite{gershogray}, and the fact that $\int_{-1/2}^{1/2}\log |C(f)|^2 df=0$. Now, since $\{x_n\}$ has no spectral energy in $|f|>\gamma$, we have that
\begin{align}
\|\{e_n\}\|^2_2=\|\{x_n\star c_n\}\|_2^2=\epsilon\|\{x_n\}\|^2.
\end{align}
Thus, if we take $\epsilon< \frac{\Delta^2}{4\|\{x_n\}\|^2}$ we get that $\|\{e_n\}\|_2<\frac{\Delta}{2}$, and in particular $|e_n|<\frac{\Delta}{2}$ for all $n\in\ZZ$. The argument is made precise in Section~\ref{sec:details}, where the infinite length filter used above is replaced with a finite length one, whose coefficients are designed based on Chebyshev's polynomials. 

We note that the recovery algorithm proposed in~\cite{bhandari2017unlimited} relied on $L$th-order discrete differentiation of $\{x^*_n\}$ reduced modulo $\Delta$, which is equivalent to $L$th-order discrete differentiation of $\{x_n\}$ reduced modulo $\Delta$, followed by $L$-th order discrete integration. In a way, this is equivalent to taking the filter $\{c_n\}$ above as $C(Z)=(1-Z^{-1})^L$ in the $Z$-domain. If the DTFT of $\{x_n\}$ is supported on a  frequency interval $[-f_0,f_0]$ for $0<f_0<1/2$ small enough, the signal $\{c_n\star x_n\}$ will never exceed $\Delta/2$ even with this choice of $\{c_n\}$, provided that $L$ is large enough. However, such choice of $\{c_n\}$ does not guarantee correct reconstruction for $f_0$ arbitrarily close to $1/2$, which is the reason for the loss of a constant factor in the required sampling rate reported in~\cite{bhandari2017unlimited}.

\section{The recovery scheme}
\label{sec:details}

\subsection{Setting and Notation}

Throughout, we will use the following convention for the Fourier transform: for a signal $x\in L^2(\RR)\cap L^1(\RR)$, its Fourier transform is given by
\[
\cF(x)(f) = \int_{\RR} x(t)e^{-2\pi i ft}dt\,,
\] 
with the inverse transform given by
\[
\cF^{-1}(\hat{x})(t) = \int_{\RR} \hat{x}(f)e^{2\pi i ft}df \,,
\]
so that the extension $\cF : L^2(\RR) \to L^2(\RR)$ is an isometry in the sense of Hilbert spaces.

\begin{definition}
For $W,E,T_0,\rho \in\RR_+$, the class of signals $\cC_{W,E,T_0,\rho}$ consists of all square-integrable, bandlimited signals $x(t)$ of the form 
\[
x(t) = \int_{-W}^{W} X(f)e^{2\pi i ft}df 
\]  
for some $X\in L^2(\RR)$, such that $\norm{x}_{L^2(\RR)}^2 \le E$, and such that moreover the tails of $x(t)$ can be explicitly controlled, in the sense that 
\[
\abs{x(t)} \le \abs{t}^{-\rho}\quad \text{when } \abs{t} \ge T_0 \,.
\]
\end{definition}

The last condition in the definition above might seem somewhat non-standard, but it is in fact not particularly strong: if, for example, the Fourier transform $X$ is absolutely continuous, hence, weakly differentiable, integration by parts readily gives us that 
\begin{align*}
\abs{x(t)}  = \bigg|\frac{1}{2\pi i t}\bigg[&X(W)e^{2\pi i W}-X(-W)e^{-2\pi i W} \nonumber\\ 
&-\int_{-W}^W X'(f)e^{2\pi i ft}df\bigg]\bigg|\nonumber\\
&\leq\frac{ |X(W)|+|X(-W)|+\|X'\|_{L^1(-W,W)}}{2\pi |t|}.
\end{align*}
This is the case, for instance, when $x$ is obtained by band-limiting a time-limited signal, that is,
\[
X(f) = \int_{-T}^{T} x_0(t) e^{-2\pi i ft}dt
\]
for some $x_0\in L^2(\RR)$. The only reason we limit ourselves to the class $\cC_{W,E,T_0,\rho}$, instead of considering simply the class of all band-limited signals with bounded energy, is that in order to use our suggested scheme, we will need access to some samples where we \emph{know} that the signal is sufficiently small. If, instead, we were guaranteed to be given sufficiently many consecutive unfolded samples of $x$, we would have been able to reconstruct the signal without needing to assume any particular tail conditions whatsoever. 

Let $x\in\cC_{W,E,T_0,\rho}$, and suppose that we observe all the samples 
\begin{equation}
x_n := x(n\cdot T_s),\,\,n\in \ZZ
\end{equation}
where $T_s>0$ is the sampling period. The celebrated Shannon-Nyquist sampling theorem guarantees, then, that $x$ could be perfectly reconstructed from the sequence $\left\{x_n\right\}_{n\in \ZZ}$, provided that $\frac{1}{T_s}$ is greater than the so-called Nyquist rate $2W$.

We consider the case where instead of observing $x_n$, we observe a modulo-reduced version of the samples $\left\{x_n^*\right\}_{n\in \ZZ}$, and we would now like to reconstruct $x(\cdot)$. Assuming that $T_s < \frac{1}{2W}$, it would clearly suffice to simply recover the unfolded measurements $\left\{x_n\right\}_{n\in\ZZ}$ from their modulo-reduced versions. 

\subsection{The general recipe}
\label{sec:recipe}

Suppose, for a moment, that $x_0,\ldots, x_{L-1}$ were given to us, and we would now like to recover the proceeding samples $x_L,x_{L+1},\ldots$ from their modulo-reduced version. If we were able to find $L$ numbers $h_1,\ldots,h_{L}$ such that the sequence 
\begin{equation}
e_n := x_n - \left(h_1 x_{n-1} + \ldots + h_L x_{n-L}\right)
\end{equation}
is guaranteed to satisfy that for all $n$, $\abs{e_n} < \frac{\Delta}{2}$, then we could recover $x_L,x_{L+1},\ldots$, in a manner which we now describe. Since $\abs{e_n}<\frac{\Delta}{2}$, clearly, $e^*_n=e_n$. Moreover, since the coefficient of $x_n$ is an integer, we have 
\[
e^*_n = \left[x^*_{n} - \left(h_1 x_{n-1} + \ldots + h_L x_{n-L}\right)\right]^*\,,
\]
hence given $x_{n-1},\ldots,x_{n-L}$ and $x^*_n$, we can compute $e_n$. But then we can also compute $x_n$ as
\[
x_n = e_n + \left( h_1x_{n-1} + \ldots h_L x_{n-L}\right)\,,
\]
and therefore iteratively recover $x_{L},x_{L+1},\ldots $. 

What remains to be done, then, is:
\begin{enumerate}
	\item Find a set of coefficients $h_1,\ldots,h_{L}$, where $L$ may depend on $\Delta$, that guarantees that $|e_n|<\frac{\Delta}{2}$ \emph{universally} for any $x\in\cC_{W,E,T_0,\rho}$.\footnote{In fact, the choice of coefficients $h_1,\ldots,h_{L}$ depends only on $W$ and $E$, whereas the tail behavior dictated by $T_0$ and $\rho$ is of no importance here.} 
	Below, we specify a method based on Chebyshev polynomials for choosing $h_1,\ldots,h_{L}$. This method guarantees that $|e_n|<\frac{\Delta}{2}$ whenever $T_s < \frac{1}{2W}$, whereas the required filter length $L$ grows as $T_s$ gets closer to $\frac{1}{2W}$.
	\item Find a negative integer $N$ such that $x^*_n=x_n$ for all $n< N$. Indeed, by the Riemann-Lebesgue Lemma, $\lim_{\abs{t}\to \infty} x(t) = 0$, so in particular we know that $\abs{x_n} < \frac{\Delta}{2}$ for all small enough $n$. To give a \emph{quantitative} bound for how far we need to go, we use the assumption on the tail  behavior of $x\in \cC_{W,E,T_0,\rho}$: choosing $N< -T_s^{-1}\max(T_0, (\Delta/2)^{-1/\rho}) $ would guarantee us that for all $n < N$, we have $\abs{x_n}<\frac{\Delta}{2}$, hence also that  $x^*_n=x_n$. 
\end{enumerate}

\subsection{Background on Chebyshev polynomials}

The Chebyshev polynomials (of the first kind) are defined by the recursive formula,
\begin{equation}
\begin{split}
T_0(y) &= 1\,, \\
T_1(y) &= y\,, \\
T_K(y) &= 2yT_{K-1}(y) - T_{K-2}(y) \,.
\end{split}
\end{equation}
It is easy to see that for $K\ge 1$, the polynomial $2^{-K+1}T_K(y)$ has degree $K$ and is monic. It is well-known that among all degree-$K$ monic polynomials, it has the smallest maximal value on the interval $[-1,1]$ (for reference, see any textbook on approximation theory, e.g \cite{devore1993constructive}). That maximum is given by
\begin{equation}
\max_{y\in[-1,1]}\abs{2^{-K+1} T_K(y)} = 2^{-K+1} \,.
\end{equation}
On any other interval $[a,b]$, we can easily construct a family of Chebyshev polynomials that take the least maximal value on that interval among all monic degree-$K$ polynomials: the mapping $y\mapsto \frac{2}{b-a}(y-a)-1$ maps $[a,b]$ bijectively into $[-1,1]$, and so 
\begin{equation}
T^{[a,b]}_K(y) = 2\left(\frac{b-a}{4}\right)^K \cdot T_K\left( \frac{2}{b-a}(y-a)-1 \right)
\end{equation}
is clearly the polynomial we need. Its maximal value on $[a,b]$ is given by
\[
\max_{y\in[a,b]}\abs{T_K^{[a,b]}(y)} = 2\left(\frac{b-a}{4}\right)^K \,.
\]
Note that $\max_{y\in[a,b]}\abs{T_K^{[a,b]}(y)} \to 0$ as $K\to\infty$ if and only if $b-a<4$. In the construction that follows, this condition will correspond exactly to the condition that $T_s <\frac{1}{2W}$, that is, we're sampling at a rate strictly above the Nyquist rate.

\subsection{Choosing the Coefficients}
\label{sec:construction}

Denote by $S:\RR^\ZZ \to \RR^\ZZ$ the backward-shift operator, that is,
\[
(Sx)_{n} = x_{n-1} \,.
\]
Writing $x$ in terms of the Fourier inversion formula, we have 
\[
(Sx)_n = \int_{-W}^{W} X(f) e^{2\pi if\cdot (nT_s)} \cdot e^{-2\pi i f T_s} df \,,
\]
hence for any Laurent polynomial $p(z)$,\footnote{That is, an expression of the form $p(z)=a_{N_1}z^{N_1}+a_{N_1-1}z^{N_1-1} + \ldots + a_{-N_2}z^{-N_2}$.} we have 
\[
\left(p(S)x\right)_n =  \int_{-W}^{W} X(f) e^{2\pi if\cdot (nT_s)} \cdot p\left(e^{-2\pi i f T_s}\right) df \,.
\]
Now, choose $h_1,\ldots, h_{2K}$ to be given by the coefficients of the polynomial
\begin{equation}
\begin{split}
p_K(z)
&= z^K\cdot T_K^{[2\cos(2\pi W T_s ),2]}\left(z+z^{-1}\right) \\
&= -h_{2K}z^{2K} - h_{2K-1} z^{2K-1} - \ldots - h_1z + 1 \,.
\end{split} 
\end{equation}
Well,
\begin{align*}
\abs{ e_n } 
&= \abs{ x_{n} - \left(h_1 x_{n-1} + \ldots+ h_{2K}x_{n-2K}\right) } \\
&= \abs{ \left(p_K(S)x\right)_{n} } \\
&= \abs{ \int_{- W}^{ W}X(f) e^{2\pi if\cdot (n T_s)} \cdot p_K\left(e^{-2\pi i f  T_s}\right) df } \\
&\le \norm{X}_{L^1(-W,W)} \cdot \max_{f \in [- W, W]} \abs{p_K\left(e^{-2\pi i f  T_s}\right)}\,.
\end{align*}
We can bound 
\begin{align*}
\norm{X}_{L^1(-W,W)} 
&\le \sqrt{2W} \norm{X}_{L^2(-W,W)} \\ 
&= \sqrt{2W}\norm{x}_{L^2(\RR)} \\
&\le \sqrt{2WE} \,,
\end{align*}
and
\begin{align*}
&\max_{f \in [- W, W]} \abs{p_K\left(e^{-2\pi i f  T_s}\right)} \\
&= \max_{f \in [- W, W]} \abs{T_K^{[2\cos(2\pi W   T_s),2]}(e^{-2\pi i f  T_s} + e^{2\pi i f  T_s}) }\\
&=\max_{f \in [- W, W]} \abs{T_K^{[2\cos(2\pi W   T_s),2]}(2\cos(2\pi f  T_s)) }\\
&= \max_{y \in [2\cos(2\pi W  T_s),2]} \abs{T_K^{[2\cos(2\pi W  T_s),2]}\left(y\right) }\\
&= 2\left(\frac{2-2\cos(2\pi W  T_s)}{4}\right)^K \,.
\end{align*}
Note that whenever $ T_s < \frac{1}{2W}$, we have that $2\cos(2\pi W T_s) > -2$, and so in that case the right-hand-side above tends to $0$ as $K\to \infty$. Thus, we readily obtain

\begin{proposition}
	\label{prop:prop1}
	Choose any 
	\begin{equation}
	K > \frac{ \log\left(\sqrt{32WE}/\Delta\right) }{ \log \left(\frac{2}{1-\cos(2\pi W T_s)}\right)  }
	\end{equation}
	and let $h_1,\ldots,h_{2K}$ be given by the coefficients of the polynomial
	\begin{equation}
	\begin{split}
	p_K(z)
	&= z^K\cdot T_K^{[2\cos(2\pi W  T_s ),2]}\left(z+z^{-1}\right) \\
	&= -h_{2K}z^{2K} - h_{2K-1} z^{2K-1} - \ldots - h_1z + 1 \,.\label{eq:coeffchoice}
	\end{split} 
	\end{equation}
	Then for every $x\in \cC_{W,E,T_0,\rho}$, the sequence
	\[
	e_n 
	=  x_{n} - \left(h_1x_{n-1} + \ldots h_{2K}x_{n-2K}\right) 
	\]
	satisfies that $\abs{e_n} < \frac{\Delta}{2}$ for all $n$.
	
\end{proposition}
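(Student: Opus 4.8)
The plan is to observe that essentially all of the analytic work has already been carried out in the display computations preceding the statement; the proposition is just the bookkeeping step that converts those bounds into the explicit threshold on $K$. First I would recall the pointwise estimate derived above, namely that for every $n$,
\[
\abs{e_n} \le \norm{X}_{L^1(-W,W)} \cdot \max_{f \in [-W,W]} \abs{p_K\left(e^{-2\pi i f T_s}\right)}\,,
\]
which holds because $e_n = (p_K(S)x)_n$ admits the Fourier representation $\int_{-W}^{W} X(f)e^{2\pi i f (nT_s)} p_K(e^{-2\pi i f T_s})\,df$, so the integrand is dominated by $\abs{X(f)}$ times the supremum of $\abs{p_K}$ over the unit-circle arc traced by $e^{-2\pi i f T_s}$ as $f$ ranges over $[-W,W]$.

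Second, I would substitute the two estimates established immediately afterward: the energy bound $\norm{X}_{L^1(-W,W)} \le \sqrt{2WE}$, which follows from Cauchy--Schwarz together with Parseval's identity and the assumption $\norm{x}_{L^2(\RR)}^2 \le E$; and the Chebyshev extremal identity
\[
\max_{f \in [-W,W]} \abs{p_K\left(e^{-2\pi i f T_s}\right)} = 2\left(\frac{2-2\cos(2\pi W T_s)}{4}\right)^K\,,
\]
which is exactly the maximal modulus of the rescaled polynomial $T_K^{[2\cos(2\pi W T_s),2]}$ over its defining interval. Multiplying the two gives, simultaneously for all $n$, the clean bound $\abs{e_n} \le 2\sqrt{2WE}\cdot\big((1-\cos(2\pi W T_s))/2\big)^K$.

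Third, I would verify that the hypothesis on $K$ forces this upper bound strictly below $\frac{\Delta}{2}$. Writing $r := (1-\cos(2\pi W T_s))/2$, the assumption $T_s < \frac{1}{2W}$ guarantees $2\pi W T_s < \pi$, hence $\cos(2\pi W T_s) > -1$ and therefore $0 \le r < 1$; in particular $\log(1/r) = \log\!\big(2/(1-\cos(2\pi W T_s))\big)$ is strictly positive, so the stated lower bound on $K$ is well defined. Rearranging the target inequality $2\sqrt{2WE}\,r^K < \frac{\Delta}{2}$ into $r^K < \Delta/(4\sqrt{2WE})$ and taking logarithms, with a sign flip because $\log r < 0$, yields precisely $K > \log(\sqrt{32WE}/\Delta)\big/\log\!\big(2/(1-\cos(2\pi W T_s))\big)$, where I use $4\sqrt{2WE} = \sqrt{32WE}$. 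As this coincides with the hypothesis, the inequality is strict and we conclude $\abs{e_n} < \frac{\Delta}{2}$ for every $n$.

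Since the nontrivial ingredients --- the $L^1$ energy estimate, the Fourier representation of $(p_K(S)x)_n$, and above all the Chebyshev minimax property --- are all imported from earlier in the text, I do not anticipate a genuine obstacle. The only point requiring care is the elementary but sign-sensitive logarithmic manipulation in the third step, where one must track that $r<1$ makes $\log r$ negative and reverses the inequality. This is also precisely where the sampling condition $T_s < \frac{1}{2W}$ is essential, as it is exactly what keeps $r<1$ and thus keeps the denominator $\log\!\big(2/(1-\cos(2\pi W T_s))\big)$ positive, preventing the bound from degenerating.
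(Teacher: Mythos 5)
Your proof is correct and takes essentially the same route as the paper: the paper's ``proof'' of Proposition~\ref{prop:prop1} is exactly the chain of displayed estimates preceding the statement (the Fourier representation of $(p_K(S)x)_n$, the Cauchy--Schwarz/Parseval bound $\norm{X}_{L^1(-W,W)}\le\sqrt{2WE}$, and the Chebyshev extremal value $2\left(\frac{2-2\cos(2\pi W T_s)}{4}\right)^K$), after which the paper simply says ``Thus, we readily obtain.'' Your third step just spells out the sign-sensitive logarithmic bookkeeping (including why $T_s<\frac{1}{2W}$ keeps the denominator positive) that the paper leaves implicit, which is a faithful completion rather than a different argument.
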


We summarize by stating our main result:

\begin{theorem}
Fix $W,E,T_0,\rho\in\RR_+$ and any $\Delta>0$. Then, any $x\in \cC_{W,E,T_0,\rho}$ can be perfectly recovered from $\{x^*(nT_s)\}_{n\in\ZZ}$ provided that $T_s<\frac{1}{2W}$.
\end{theorem}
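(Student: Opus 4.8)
The plan is to combine the two components already in place — the universal prediction filter of Proposition~\ref{prop:prop1} and the tail estimate of Section~\ref{sec:recipe} — into a single sequential decoder, and then close with the classical sampling theorem. There is no deep step remaining; the theorem is essentially a matter of correctly chaining what has been established.

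First I would fix the filter. Given $\Delta>0$ together with $W$ and $E$, I would pick $K$ as prescribed by Proposition~\ref{prop:prop1}. Such a finite $K$ exists precisely because the hypothesis $T_s<\frac{1}{2W}$ forces $2\cos(2\pi W T_s)>-2$, making the denominator $\log\!\big(\tfrac{2}{1-\cos(2\pi W T_s)}\big)$ strictly positive. Setting $L=2K$ and taking $h_1,\ldots,h_L$ from \eqref{eq:coeffchoice}, Proposition~\ref{prop:prop1} then guarantees that the prediction error $e_n=x_n-(h_1x_{n-1}+\ldots+h_Lx_{n-L})$ obeys $\abs{e_n}<\frac{\Delta}{2}$ for every $n\in\ZZ$ and every $x\in\cC_{W,E,T_0,\rho}$.

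Next I would use the tail condition to seed the recursion. From $\abs{x(t)}\le\abs{t}^{-\rho}$ for $\abs{t}\ge T_0$, choosing a negative integer $N<-T_s^{-1}\max\!\big(T_0,(\Delta/2)^{-1/\rho}\big)$ ensures $\abs{x_n}<\frac{\Delta}{2}$, hence $x^*_n=x_n$, for all $n<N$. This hands me, at no cost, the $L$ consecutive unfolded samples $x_{N-L},\ldots,x_{N-1}$ needed to begin predicting. I would then run the forward recursion of Section~\ref{sec:recipe}: assuming inductively that $x_{N-L},\ldots,x_{n-1}$ are known exactly, the integrality of the coefficient of $x_n$ lets me compute $e^*_n=\big[x^*_n-(h_1x_{n-1}+\ldots+h_Lx_{n-L})\big]^*$ from the observation $x^*_n$ and the recovered past; since $\abs{e_n}<\frac{\Delta}{2}$ gives $e^*_n=e_n$, I recover $x_n=e_n+(h_1x_{n-1}+\ldots+h_Lx_{n-L})$ exactly. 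Iterating yields all $x_n$ with $n\ge N$, and together with $x_n=x^*_n$ for $n<N$ this reconstructs the full sequence $\{x_n\}_{n\in\ZZ}$; a final appeal to Shannon-Nyquist, valid since $T_s<\frac{1}{2W}$, recovers $x(t)$ itself.

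The one point I would take care to emphasize, and the closest thing to a subtlety here, is that each $x_n$ is recovered \emph{exactly} rather than approximately. Because of this, the past used in every prediction coincides with the true past, so the bound $\abs{e_n}<\frac{\Delta}{2}$ — which Proposition~\ref{prop:prop1} asserts for the genuine sequence $\{x_n\}$ — remains applicable at every iteration, and no error can accumulate. Were the decoding merely approximate, this inductive reuse of the universal bound would break down, so its exactness is precisely what makes the sequential scheme succeed indefinitely.
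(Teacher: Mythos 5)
Your proposal is correct and follows essentially the same route as the paper's proof: it invokes Proposition~\ref{prop:prop1} for the filter, seeds the sequential decoder of Section~\ref{sec:recipe} using the tail condition, and finishes with Shannon--Whittaker interpolation. The paper's proof is just a terser statement of exactly this chain, so your added detail (the inductive exactness argument and the explicit choice of $N$) is a faithful elaboration rather than a different approach.
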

\begin{proof}
	By Proposition~\ref{prop:prop1}, the recovery procedure we described in Section~\ref{sec:recipe} with $h_1,\ldots,h_{2K}$ chosen as in~\eqref{eq:coeffchoice} correctly recovers the unfolded samples $\left\{x_n\right\}_{n\in \ZZ}$ from the folded samples $\{x^*(nT_s)\}_{n\in\ZZ}$. 
	Now, using the Shannon-Whittaker interpolation formula, see e.g.~\cite{mallat2009wavelet}, we recover $x$ at every point.
\end{proof}


\section{Discussion on  Quantization Noise}

The recovery algorithm described above, relies on filtering $\{x_n\}$ using a monic filter $\{c_n\}$ with vanishing magnitude for the in-band frequencies. By Paley-Wiener's theorem, combined with Jensen's inequality, it is easy to see that any such filter must have unbounded energy in the out-of-band frequencies (i.e., the high frequencies where $\{x_n\}$ has no energy). 

If the input to the recovery algorithm were $\{[x_n+u_n]^*\}$ for some white process $\{u_n\}$, instead of $\{x^*_n\}$, we would therefore have that the result of convolving with $\{c_n\}$ would no longer be restricted to $[-\frac{\Delta}{2},\frac{\Delta}{2})$, due to the contribution of $\{u_n\star c_n\}$ in the out-of-band frequencies.  Modeling the effect of quantization as an additive white noise, we see that the developed algorithm collapses in the presence of quantization noise. This is also the case for the reconstruction algorithm of~\cite{bhandari2017unlimited}, if the order of discrete derivative used there is large enough.

The remedy to this phenomenon is to judiciously balance the in-band energy of $\{c_n\}$  and its out-of-band energy, taking into account the effect of quantization noise, as done in~\cite{othsw18} (see also~\cite{zke08}). However, taking the quantization noise into account yields a lower bound on the in-band energy of $\{c_n\}$, which in turn dictates that $\Delta$ can no longer be arbitrarily small. We conclude that it is essential to take into account quantization  
noise in oversampled systems

\bibliographystyle{IEEEtran}
\bibliography{ref}

\begin{thebibliography}{10}
\providecommand{\url}[1]{#1}
\csname url@samestyle\endcsname
\providecommand{\newblock}{\relax}
\providecommand{\bibinfo}[2]{#2}
\providecommand{\BIBentrySTDinterwordspacing}{\spaceskip=0pt\relax}
\providecommand{\BIBentryALTinterwordstretchfactor}{4}
\providecommand{\BIBentryALTinterwordspacing}{\spaceskip=\fontdimen2\font plus
\BIBentryALTinterwordstretchfactor\fontdimen3\font minus
  \fontdimen4\font\relax}
\providecommand{\BIBforeignlanguage}[2]{{%
\expandafter\ifx\csname l@#1\endcsname\relax
\typeout{** WARNING: IEEEtran.bst: No hyphenation pattern has been}%
\typeout{** loaded for the language `#1'. Using the pattern for}%
\typeout{** the default language instead.}%
\else
\language=\csname l@#1\endcsname
\fi
#2}}
\providecommand{\BIBdecl}{\relax}
\BIBdecl

\bibitem{bhandari2017unlimited}
A.~Bhandari, F.~Krahmer, and R.~Raskar, ``On unlimited sampling,'' in
  \emph{2017 International Conference on Sampling Theory and Applications
  (SampTA)}.\hskip 1em plus 0.5em minus 0.4em\relax IEEE, 2017, pp. 31--35.

\bibitem{bkr18}
A.~{Bhandari}, F.~{Krahmer}, and R.~{Raskar}, ``Unlimited sampling of sparse
  sinusoidal mixtures,'' in \emph{2018 IEEE International Symposium on
  Information Theory (ISIT)}, June 2018.

\bibitem{ct17}
M.~Cucuringu and H.~Tyagi, ``On denoising modulo 1 samples of a function,''
  \emph{arXiv preprint arXiv:1710.10210}, 2017.

\bibitem{mjg18}
O.~Musa, P.~Jung, and N.~Goertz, ``Generalized approximate message passing for
  unlimited sampling of sparse signals,'' \emph{arXiv preprint
  arXiv:1807.03182}, 2018.

\bibitem{rass18}
S.~{Rudresh}, A.~{Adiga}, B.~A. {Shenoy}, and C.~S. {Seelamantula},
  ``Wavelet-based reconstruction for unlimited sampling,'' in \emph{2018 IEEE
  International Conference on Acoustics, Speech and Signal Processing
  (ICASSP)}, April 2018.

\bibitem{rbwp18}
L.~Rencker, F.~Bach, W.~Wang, and M.~D. Plumbley, ``Sparse recovery and
  dictionary learning from nonlinear compressive measurements,'' \emph{arXiv
  preprint arXiv:1809.09639}, 2018.

\bibitem{sh18}
V.~Shah and C.~Hegde, ``Signal reconstruction from modulo observations,''
  \emph{arXiv preprint arXiv:1812.00557}, 2018.

\bibitem{jt19}
F.~Ji, W.~P. Tay \emph{et~al.}, ``Recovering graph signals from folded
  samples,'' \emph{arXiv preprint arXiv:1903.03741}, 2019.

\bibitem{othsw18}
O.~{Ordentlich}, G.~{Tabak}, P.~K. {Hanumolu}, A.~C. {Singer}, and G.~W.
  {Wornell}, ``A modulo-based architecture for analog-to-digital conversion,''
  \emph{IEEE Journal of Selected Topics in Signal Processing}, vol.~12, no.~5,
  pp. 825--840, Oct 2018.

\bibitem{boufounos12}
P.~T. Boufounos, ``Universal rate-efficient scalar quantization,'' \emph{IEEE
  Transactions on Information Theory}, vol.~58, no.~3, pp. 1861--1872, March
  2012.

\bibitem{oe13b}
O.~Ordentlich and U.~Erez, ``Integer-forcing source coding,'' \emph{IEEE
  Transactions on Information Theory}, vol.~63, no.~2, pp. 1253--1269, Feb
  2017.

\bibitem{ro19}
E.~Romanov and O.~Ordentlich, ``Blind unwrapping of modulo reduced {G}aussian
  vectors: Recovering {MSB}s from {LSB}s,'' \emph{arXiv preprint
  arXiv:1901.10396}, 2019.

\bibitem{gershogray}
A.~Gersho and R.~M. Gray, \emph{Vector quantization and signal
  compression}.\hskip 1em plus 0.5em minus 0.4em\relax Springer Science \&
  Business Media, 2012, vol. 159.

\bibitem{devore1993constructive}
R.~A. DeVore and G.~G. Lorentz, \emph{Constructive approximation}.\hskip 1em
  plus 0.5em minus 0.4em\relax Springer Science \& Business Media, 1993, vol.
  303.

\bibitem{mallat2009wavelet}
S.~Mallat \emph{et~al.}, ``A wavelet tour of signal processing: The sparse
  way,'' \emph{[3rd ed.] ed. Amsterdam: Elsevier Academic Press}, 2009.

\bibitem{zke08}
R.~Zamir, Y.~Kochman, and U.~Erez, ``Achieving the {G}aussian rate-distortion
  function by prediction,'' \emph{IEEE Transactions on Information Theory},
  vol.~54, no.~7, pp. 3354--3364, July 2008.

\end{thebibliography}

\end{document}